\newtheorem{theorem}{Theorem}
\newtheorem{lemma}{Lemma}
\newtheorem{corollary}{Corollary}
\begin{document}

\title{Fundamental Results for a Generic Implementation of Barriers using Optical Interconnects\vspace{-6mm}}
\author{\IEEEauthorblockN{Sandeep Chandran, Eldhose Peter, Preeti Ranjan Panda, and Smruti R. Sarangi\\}
\IEEEauthorblockA{Department of Computer Science and Engineering, \\
Indian Institute of Technology Delhi,  \\
Hauz Khas, New Delhi -- 110016, India \\
\{sandeep, eldhose, panda, srsarangi\}@cse.iitd.ac.in}
}
\maketitle

\begin{abstract}
In this report, we report some fundamental results and bounds on the number of messages and
storage required to implement barriers using futuristic on-chip optical and RF networks. 
We prove that it is necessary to maintain a count to at least $N$ (number of threads) in memory, broadcast the barrier id at least once, and if we elect a co-ordinator, we can 
reduce the number of messages by a factor of $O(N)$.

\end{abstract}
	
\section{Introduction}

We consider a generic implementation of a barrier over futuristic interconnects that support efficient
broadcast/multicast
mechanisms between participating threads. Two early works in this area~\cite{prvulovic,binkert} propose 
methods for implementing barriers using RF and optical interconnects respectively. These proposals implement a 
wired-AND kind of logic using a broadcast bus. For example, each core transmits a signal (RF or optical) at 
a predetermined wavelength on a broadcast bus when it reaches a barrier. When a core receives signals from all 
the other cores, it decides to proceed past the barrier. 

Such simplistic proposals that assume every core is interested in entering the barrier, and disallow context 
switches or thread migrations, do not work on actual systems where context switches, thread migrations, 
and unknown participants are the norm. Handling these cases without sacrificing the ultra-fast release 
latencies offered by futuristic interconnects is non-trivial and requires several fundamental changes to 
the design of a barrier implementation. 

Here, we prove several key properties that must hold for any ultra-fast barrier implementation that accounts for 
unknown participants, context-switches and thread migrations.

\section{Proofs of key results}

Let us assume that the total number of threads in a barrier group is $N$, and $M$ is the number of threads that 
are currently swapped in and are running on cores. Let the number of cores be $\mathcal{C}$. We assume that each
core is associated with a {\em barrier} controller.
Let us further assume that a barrier controller can only maintain state for the thread that it is currently 
associated with (is scheduled on its associated core). 

We also assume that each barrier group has unknown participants. This means that a thread does not know the 
identities of other threads in the barrier group, and the cores that they are scheduled on. This is typically the 
case where the OS can freely schedule any thread on any core, and even migrate them. The threads involved in a 
barrier group can sometimes be deduced at compile time; however, sometimes they are dependent on runtime parameters. 
We consider the general case. Additionally, we assume a synchronous algorithm where we have the notion of a round.
Every message is sent at the beginning of a round, and received at the end of a round.

\begin{theorem}
We need to store at least $\lceil log(N-M) \rceil$ bits in memory.
\label{thm:nm}
\end{theorem}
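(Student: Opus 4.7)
My plan is to give an adversary/indistinguishability argument that isolates the information which \emph{must} persist in memory (as opposed to inside a barrier controller) and then apply an information-theoretic counting bound.

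First, I would partition the threads in the barrier group into the $M$ currently scheduled ones and the $N-M$ currently swapped-out ones. By the standing assumption, a barrier controller holds state only for the thread scheduled on its core, so the arrival/pending status of any swapped-out thread cannot be stored in any controller. Since the assumed algorithm is synchronous and messages are only sent at round boundaries, any information about swapped-out threads that the implementation wishes to carry across rounds must live somewhere persistent that is not a controller, i.e.\ in memory.

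Next I would construct a family of $N-M+1$ executions $E_0,E_1,\dots,E_{N-M}$, where in $E_k$ exactly $k$ of the swapped-out threads have already registered their arrival at the barrier (and none of the currently scheduled threads has arrived yet, so the controllers are in identical states across the family). The correctness of the barrier requires that a release decision in a future round treats $E_{N-M}$ (all swapped-out threads in) differently from $E_{k}$ for $k<N-M$: in the former, a release becomes possible once the $M$ scheduled threads also arrive, in the latter it must not. This means that the in-memory state must differ across at least some pair $(E_k,E_{k'})$ whenever $k\neq k'$, because the controllers are indistinguishable across the family by construction, and the protocol's subsequent behaviour is a deterministic function of memory plus controller state plus the messages it receives.

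From here I would run the standard pigeonhole step: if the memory holds fewer than $\lceil \log(N-M)\rceil$ bits, it has fewer than $N-M$ distinguishable configurations, so two of the $E_k$ collapse to the same global state. Any synchronous algorithm would then behave identically on these two executions in every subsequent round, contradicting correctness (it would either release too early in one or fail to release in the other). Hence at least $\lceil \log(N-M)\rceil$ bits of memory are required, which is the stated bound.

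The main obstacle I anticipate is tightening the indistinguishability step: I need to ensure the adversary can genuinely present the $E_k$'s without letting the protocol ``hide'' swapped-out arrival information inside the messages in flight or inside controllers of yet-unrelated cores. I would close this gap by picking the round at which the adversary takes the snapshot immediately after all $N-M$ arrivals have been processed by the system (so no messages are in flight) and by invoking the assumption that a controller only stores state for its currently scheduled thread, which rules out stashing the information on some idle controller. Everything else is routine information theory.
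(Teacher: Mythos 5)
Your proposal is correct and follows essentially the same route as the paper's own proof: both reduce the claim to a pigeonhole/indistinguishability argument over the $N-M$ possible arrival counts of swapped-out threads, whose status cannot live in the barrier controllers and must therefore occupy at least $\lceil \log(N-M)\rceil$ bits of memory. Your explicit family of executions $E_0,\dots,E_{N-M}$ and the snapshot step ruling out in-flight messages make the same point somewhat more carefully than the paper's sequential wake-up narrative, but the underlying idea is identical.
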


\begin{proof}
The $M$ threads that are swapped in can indicate their barrier status by either sending or diverting
an optical signal (see~\cite{prvulovic,binkert}). Their state is maintained in the barrier controllers. 
The state of the remaining $N-M$ threads cannot be maintained at the barrier controllers because we
assume that the barrier controllers maintain state for only the thread that is scheduled on their associated 
core. We thus need to store the state of the $N-M$ threads in memory, which we are assuming to be the only 
other storage location. For the $N-M$ threads, we are not interested in the barrier status of every thread. 
We are interested in knowing if all the threads that have been swapped out ($N-M$ in number) have reached the 
barrier or not. However, 1 bit is not sufficient. We need to in fact prove that any number less than 
$\lceil log(N-M) \rceil$ bits is not sufficient.

Let us designate $\lceil log(N-M) \rceil$ as $\kappa$. Let us assume that we have $\lambda$ bits, where $\lambda <
\kappa$. In this case, we can support a maximum of $2^\lambda$ states, which is less than $N-M$. Now, let us 
assume that we start with a state where none of the $\kappa$ threads have reached the barrier. Sequentially, 
one of them wakes up reaches the barrier, and then waits (thread gets swapped out). If we want a change in a 
state for every such action we need at least $N-M$ states. $\lambda$ bits are not sufficient. This means that 
we cannot have a state change for every thread in this group of $\kappa$ threads reaching a barrier. 
Let us assume a thread $t$, which reaches a barrier, and we do not change any state subsequently. 
We will have no record of the fact that $t$ has reached a barrier, and thus this scenario is no different 
from a scenario where $t$ has not reached the barrier. As a result, we cannot guarantee correctness.

We thus have a proof by contradiction that says that we need to save at least $\kappa=\lceil log(N-M) \rceil$ bits in memory.
Since each state in a system with $\kappa$ threads  can be associated with a number from $1 \ldots \kappa$ let us treat
each state as a count. We can thus say that it is necessary to maintain a count of the number of threads that have
reached the barrier. 
\end{proof}

\begin{table*}[t]
\normalsize
\begin{center}
\begin{tabular}{||p{12cm}|l||}
\hline
\hline
\textbf{Result} & \textbf{Theorem} \\
\hline
We need to maintain a count to at least $N$. &Theorem~\ref{thm:nm} and Corollary~\ref{corr:nm} \\
\hline
It is necessary for any thread to at least broadcast the barrier id once. & Lemma~\ref{lm:bid} \\
\hline
We expect to reduce the number of messages for getting the status of the barrier's count
      by at least $O(N)$ times with a co-ordinator. & Theorems~\ref{thm:dist} and \ref{thm:central} \\
\hline
\hline
\end{tabular}
\caption{Summary of the main results \label{tab:summary}}
\end{center}
\vspace{-7mm}
\end{table*}

\begin{corollary}
\label{corr:nm}
We need to maintain a count till at least $N-M$. If we want a generic implementation then we need to keep a count till
$N$. 
\end{corollary}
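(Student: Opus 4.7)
The plan is to split the corollary into its two assertions and dispatch them sequentially, leaning almost entirely on Theorem~\ref{thm:nm} for the first part and on a careful reading of the word ``generic'' for the second. First I would restate the conclusion of Theorem~\ref{thm:nm}: the memory must admit at least $N-M$ distinguishable states, and the final paragraph of that proof has already licensed us to view each state as a distinct value of a count. From there it is immediate that the counter must be able to represent at least $N-M$ values, which establishes that we must be able to count up to at least $N-M$. I would phrase this as a direct corollary rather than reprove it; the same ``collapse two scenarios, contradict correctness'' argument that Theorem~\ref{thm:nm} used applies verbatim if the counter saturates before $N-M$.

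For the second assertion I would argue about quantifier order. A \emph{generic} barrier implementation is one whose storage is provisioned without assuming any particular value of $M$, because by the setup $M$ is a runtime quantity controlled by the OS's scheduling and migration decisions, whereas $N$ is a property of the barrier group that is fixed once the group is formed. Hence the bound from the first assertion must hold \emph{for every} admissible value of $M$, and therefore the counter must be sized for the worst case. I would then point out that $M=0$ is admissible: nothing in the model forbids every participant from being swapped out simultaneously (for instance, if higher-priority threads occupy all $\mathcal{C}$ cores, or simply if the barrier group is larger than the number of cores and no participating thread happens to be currently scheduled). Substituting $M=0$ into $N-M$ yields the required count of $N$.

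The main obstacle, as I see it, is not the arithmetic but the semantic step of justifying that $M=0$ must be supported by any ``generic'' implementation. I would spend the bulk of the writeup on this point: without assumptions on the scheduler, any lower bound on $M$ would amount to baking scheduling assumptions into the barrier design, which contradicts the paper's stated setting of unknown participants, free context switches, and thread migrations. Once that is granted, the substitution $M \leftarrow 0$ in Theorem~\ref{thm:nm} closes the proof, and the corollary follows with no further machinery.
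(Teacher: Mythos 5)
Your proposal matches the paper's own proof: the first assertion is obtained by citing Theorem~\ref{thm:nm} and treating the $N-M$ states as count values, and the second by observing that a generic implementation cannot assume anything about $M$, so the worst case $M=0$ forces a count up to $N$. Your additional elaboration on why $M=0$ is admissible is a reasonable expansion of the paper's one-line justification, but the argument is essentially identical.
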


\begin{proof}
As proven in Theorem~\ref{thm:nm} we need at least $N-M$ states. We can number each of the states from 1 to $N-M$, and
treat the numbering as a count. We might not be sure about $M$ in a generic implementation because all the threads
might be swapped out at some point of time and $M$ can be 0. To make the implementation generic, we need to keep a count
till $N$ by expending $\lceil log(N) \rceil$ bits.
\end{proof}

\begin{corollary}
If there are $K$ context switches for threads in a barrier group, we need at least $K$ writes to memory.
\end{corollary}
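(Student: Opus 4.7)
The plan is to reduce this corollary to the storage constraints already established in Theorem~\ref{thm:nm} and Corollary~\ref{corr:nm}, by arguing that a context switch forces a transfer of per-thread barrier status out of the barrier controller and into memory. The picture to keep in mind is: immediately before the switch, the outgoing thread's barrier state lives in its core's barrier controller; immediately after, that controller slot must be repurposed for the incoming thread, so the outgoing information has nowhere else to go except memory.

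First I would fix an arbitrary context switch and let $t$ be the outgoing thread and $t'$ the incoming thread. By the working assumption that a barrier controller can only maintain state for its currently scheduled thread, the controller state characterising $t$'s barrier status (reached or not reached in the current run) must be vacated before $t'$ can make use of the controller. Since by hypothesis memory is the only other storage location in the system, any non-trivial bit about $t$ that survives the context switch is a bit written to memory; in particular, any update to the aggregate count of Corollary~\ref{corr:nm} that reflects $t$'s contribution is a memory write attributable to this specific switch.

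Next I would argue that this write cannot be elided. If the system performed the context switch for $t$ with zero memory writes, then the subsequent global state would be identical to one in which $t$ had never been scheduled in the first place; by replaying the adversarial scheduling argument from Theorem~\ref{thm:nm} (sequentially wake up, reach the barrier, swap out), one constructs two executions that are indistinguishable to the system but require different barrier-release decisions, violating correctness. Hence each of the $K$ context switches contributes at least one memory write, and the writes associated with distinct switches are distinct (they happen in disjoint time intervals), so the total is at least $K$.

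The main obstacle I anticipate is formalising the claim that the write truly \emph{must} happen \emph{at} the context switch rather than being deferred or amortised. I would handle this by observing that once the controller is overwritten by $t'$'s state, $t$'s status is no longer recoverable from any on-chip location, so the write must occur no later than the moment of the switch itself; this pins each write to a unique context switch event and prevents double counting.
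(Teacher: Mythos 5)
Your proposal is correct and takes essentially the same approach as the paper: the paper's entire proof is the single observation that, because the barrier controller cannot retain state for a swapped-out thread and memory is the only other storage location, each context switch forces the thread's barrier status to be written to memory. Your indistinguishability and write-pinning arguments merely elaborate details that the paper's one-sentence proof leaves implicit.
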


\begin{proof}
Since the barrier controller cannot maintain state, each thread needs to write the status of barriers to memory. 
\end{proof}

Now that we have established some bounds on the space required, let us look at some results when time constraints
are added. Let us assume a synchronous concurrent algorithm where messages are sent in {\em rounds} (see the book
by Nancy Lynch for a more precise definition of the term, {\em round}). Let us consider the moment that some thread
is aware that the barrier needs to be released. At that point it clearly has information regarding all the other
threads. The total number of messages sent in the system of threads has to be at least $N-1$ because all the threads
other than the thread that releases the barrier need to send a message to at least one other thread to make it aware of its
status. We state this result in the form of a lemma.

\begin{lemma}
We need to exchange at least $N-1$ 1-bit messages to decide when a barrier needs to be released.
\end{lemma}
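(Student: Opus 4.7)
The plan is to prove the bound by an indistinguishability argument. I would fix an arbitrary correct protocol, consider the execution in which all $N$ threads arrive at the barrier, and let $r$ denote the thread that eventually announces the release at some round $R$. The essential fact I would exploit is that the barrier specification forces $r$ to release only when its local state at round $R$ is genuinely inconsistent with every execution in which some thread has not yet arrived; otherwise the protocol would release incorrectly in at least one of those alternative executions.

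Next, I would assume for contradiction that fewer than $N-1$ messages are transmitted up to round $R$. Since $r$ is the releaser, the only threads that can contribute information about their own arrival status are the $N-1$ others, so by pigeonhole at least one non-releaser thread $t$ must have sent zero messages by round $R$. I would then construct a twin execution that is identical to the original, except that $t$ never arrives at the barrier, and compare the two runs from the viewpoint of $r$.

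The delicate point, and the one I expect to be the main obstacle, is to argue that $t$ really does remain silent in the twin execution. In the synchronous model set up in the paper, each thread's outgoing messages are a deterministic function of its local history; since $t$'s history up to round $R$ contains no arrival event in either run and $t$ receives no barrier-related message to which it could respond differently, $t$ emits exactly the same (empty) sequence of sends in both. Consequently every other thread, including $r$, observes an identical transcript in the two executions and issues the release in round $R$ of both, contradicting correctness in the twin execution where $t$ has not arrived. Finally, since each of the $N-1$ mandatory messages only needs to convey the single bit \emph{arrived / not-arrived}, the bound is tight at $N-1$ one-bit messages, which gives the statement of the lemma.
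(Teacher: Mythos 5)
The paper never actually gives a formal proof of this lemma; it offers only the one-sentence assertion in the preceding paragraph that each of the $N-1$ non-releasing threads ``need[s] to send a message to at least one other thread to make it aware of its status.'' Your indistinguishability argument is the standard way to make that assertion rigorous, and the skeleton is right: pigeonhole to find a silent thread $t$, a twin execution in which $t$ has not arrived, and an identical view at the releaser. However, the step you yourself flag as delicate is exactly where the argument breaks. You justify $t$'s silence in the twin by claiming that ``$t$'s history up to round $R$ contains no arrival event in either run,'' which contradicts your own setup: in the original run all $N$ threads arrive, so $t$'s history there \emph{does} contain an arrival event. In the original run $t$ is silent because of the pigeonhole assumption; in the twin, $t$'s local history is genuinely different (no arrival), and since you stipulate that sends are a deterministic function of local history, you cannot conclude that $t$ emits the same (empty) send sequence.

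This is not a cosmetic slip, because in a synchronous round model silence can carry information: consider a protocol in which only \emph{non-arrived} threads transmit each round and the releaser fires upon hearing nothing for a round. In the all-arrived execution this protocol sends zero messages and still releases correctly, so the $N-1$ bound is false without a further assumption. To close the gap you must make explicit what the paper leaves implicit --- that a thread performs no barrier-protocol sends before it reaches the barrier. Equivalently, realize the twin by merely \emph{delaying} $t$'s arrival until after round $R$, so that $t$ is still executing pre-barrier code and is provably silent; then every other thread's transcript is unchanged, $r$ releases prematurely, and the contradiction stands. With that assumption stated, your argument goes through and is considerably more rigorous than what the paper provides.
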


Let us now look at the size of each message. 

\begin{lemma}
For each barrier controller, it is necessary to send a message with the id of the barrier at least once for each barrier entry/release. 
\label{lm:bid}
\end{lemma}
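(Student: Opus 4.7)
I would prove Lemma~\ref{lm:bid} by a standard indistinguishability (adversary) argument that leverages the unknown-participants assumption. The approach is: assume for contradiction that some barrier controller $c$ can complete an entry or release of a barrier $B$ without ever sending a message whose payload carries the id of $B$. I then construct a companion execution involving a different barrier $B'$ whose view, at every other controller, is bit-for-bit identical to the original, and show that this forces a correctness violation on at least one of the two executions.

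First I would set the stage. Let $t$ be the thread currently scheduled on $c$'s core. Because the system assumes unknown participants and a generic implementation, $t$ may belong to two distinct barrier groups $B$ and $B'$ that share some but not all members, and no remote controller knows a priori which of them $t$ is currently entering. Consider two synchronous executions $\mathcal{E}$ and $\mathcal{E}'$ in which $t$ enters $B$ and $B'$ respectively, while every other thread performs exactly the same local actions in both runs. Under the contradiction hypothesis, the messages emitted by $c$ in $\mathcal{E}$ contain no bit that identifies $B$; the only candidate for a distinguishing bit between the two runs has been removed.

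Next I would argue indistinguishability by induction on the round number. In round $0$, $c$'s outgoing messages are, by hypothesis, the same sequence of bits in both $\mathcal{E}$ and $\mathcal{E}'$, and every other controller starts in the same local state; hence every controller's round-$0$ behaviour coincides in the two executions. Assuming coincidence through round $k$, the inputs to every controller in round $k+1$ are identical, so the outputs are identical, closing the induction. In particular, each controller takes the same release action at the same round in both executions.

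Finally I would derive the contradiction. Choose the scenario so that, in $\mathcal{E}$, all members of $B$ have reached $B$ and the barrier should be released, while in $\mathcal{E}'$, some member of $B' \setminus B$ has not yet reached $B'$ and the barrier must not be released. Indistinguishability forces the same decision in both runs, producing either a spurious release in $\mathcal{E}'$ or a missed release in $\mathcal{E}$. The main obstacle I expect is ruling out \emph{implicit} identifiers — for instance, encoding the barrier through the sending thread's id, the core id, or the round number on which the message is sent. I would handle each of these by exploiting the freedom to choose $B$ and $B'$ so that $t$ belongs to both and both are entered in the same round of the global schedule, leaving the explicit barrier id as the only bit that could distinguish the two entries and thereby completing the contradiction.
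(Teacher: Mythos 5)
Your proposal is correct, and its core idea is the same one the paper relies on: under the unknown-participants assumption, several barrier groups can coexist on the chip, so a message that omits the barrier id is ambiguous about which group's count it should affect. Where you differ is in rigor and machinery. The paper's proof is a short informal justification --- it simply asserts that since threads from two or more barrier groups may be simultaneously scheduled, a thread must identify its group, and stops there. You instead develop this into a full indistinguishability argument: a contradiction hypothesis, two executions $\mathcal{E}$ and $\mathcal{E}'$ differing only in which barrier is entered, a round-by-round induction showing every other controller's view coincides, and a forced spurious or missed release. You also explicitly close the loophole of \emph{implicit} identifiers (thread id, core id, round number) by choosing $t$ to belong to both $B$ and $B'$ and aligning the entries in the global schedule --- a genuine strengthening, since the paper never addresses the possibility that the receiver could infer the group from metadata. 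What your approach buys is an actual proof in the distributed-computing sense; what it costs is a mild extra modeling assumption (that a single thread may participate in two barrier groups, or at least that the adversary may choose group memberships after seeing the protocol), which the paper's looser prose does not need to make explicit. Either way, your argument establishes the lemma for one controller $c$ and applies verbatim to every controller, so the universally quantified statement follows.
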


\begin{proof}
We are assuming the participants are unknown, which is most often the case in modern programming languages. 
Hence, the assumption per se is not unrealistic. A thread thus has to inform other threads about the barrier
group that it belongs to. This is necessary because we can have threads from two or more barrier groups 
simultaneously scheduled on a multicore processor. A thread  needs to identify itself uniquely by specifying
the barrier group that it belongs to. We are assuming that the id of the barrier group, and the id of the barrier
are synonymous terms here.
\end{proof}

Lemma~\ref{lm:bid} proves that it is necessary to affix the barrier id with at least one message sent by each
barrier controller.

Let us now look at the reasons and resources required for having a co-ordinator. We would
typically like to have a co-ordinator such that only one release message is sent, and after a set of threads
are swapped in, we minimize the number of messages. Let us compare the two scenarios, where we have a single co-ordinator
and the protocol is fully distributed (all the nodes in a barrier group are co-ordinators). 

\begin{theorem}
\label{thm:dist}
In the fully distributed scenario (each thread is a co-ordinator), we need to send $(N+K)(2\mathcal{C}-2)$ messages to get an updated
status of the current state of the barrier. Here, $K$ is the number of additional context switches.
\end{theorem}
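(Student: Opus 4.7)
The plan is to decompose the message count into (i) the number of ``state-changing events'' that the distributed system must account for and (ii) the number of messages each such event generates. First, I would argue that in the fully distributed setting, every barrier controller must be an up-to-date replica, because any of them may be the one whose thread is last to enter the barrier and therefore the one that must decide to release. This means that every time the global state of the barrier could change, all other controllers must learn about it.

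Next I would enumerate the state-changing events. There are exactly $N$ thread arrivals at the barrier that must eventually be communicated, plus $K$ additional context switches; by the corollary following Theorem~\ref{thm:nm}, each context switch forces state to be externalized (a migrating thread cannot leave its barrier status hidden in the old controller), so each context switch acts as a state-changing event that must again be propagated. This gives $N+K$ events in total.

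Then I would count the messages per event. When an event occurs on one of the $\mathcal{C}$ cores, the controller must both announce its new local state to the other $\mathcal{C}-1$ cores and obtain their current states in return (needed because, with no co-ordinator, the responder cannot infer from a single broadcast whether the sender is now aware of the global count). A natural accounting is one outgoing broadcast ``leg'' of $\mathcal{C}-1$ messages and one reply leg of $\mathcal{C}-1$ messages, for a total of $2\mathcal{C}-2$ point-to-point messages per event. Multiplying, one obtains $(N+K)(2\mathcal{C}-2)$.

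The main obstacle will be justifying the factor of $2$ rigorously rather than heuristically: one must rule out the possibility that a single $(\mathcal{C}-1)$-message broadcast suffices, by showing that in a fully distributed protocol with unknown participants (Lemma~\ref{lm:bid}) and no central aggregator, each controller both publishes its contribution and must acquire the contributions of the others, so the announce-and-collect pair cannot be collapsed. I would make this concrete by exhibiting, for any protocol that uses only $\mathcal{C}-1$ messages per event, a schedule in which some controller is unable to determine whether the barrier should be released, mirroring the contradiction argument used in the proof of Theorem~\ref{thm:nm}.
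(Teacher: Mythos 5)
Your decomposition --- $(N+K)$ events, each costing $2\mathcal{C}-2$ messages --- is the same one the paper uses, and your final count matches. Where you diverge is in what the events are and why each costs $2\mathcal{C}-2$. The paper's events are \emph{swap-ins}: each of the $N$ threads is swapped in at least once, plus $K$ additional context switches, and a freshly swapped-in thread holds no barrier state (a controller only keeps state for the thread currently scheduled on its core), so it must \emph{pull} the current count. Because participants are unknown, it cannot address its group members directly and must query all $\mathcal{C}-1$ other cores; each query demands a reply, giving $2\mathcal{C}-2$. Your events are instead barrier arrivals plus context switches, propagated in a \emph{push} model in which every controller is kept an up-to-date replica. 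That framing makes the factor of $2$ genuinely hard to defend --- as you note yourself --- because a one-way announcement of $\mathcal{C}-1$ messages could in principle suffice if every controller simply increments a local count, and your proposed fix (exhibiting an adversarial schedule in which some controller cannot decide) is real additional work that the paper never has to do. The paper's pull framing dissolves that obstacle: the reply leg is forced because the querying thread is the one lacking information, and the paper handles the ``can we collapse this'' worry by observing that even electing a leader costs at least one message per thread, i.e.\ still $\mathcal{C}-1$. So your route reaches the same formula but leaves the hardest step open, whereas the paper's choice of event (swap-in rather than arrival) is precisely what makes the two-leg cost immediate.
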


\begin{proof}
Whenever, a thread is swapped in, it does not know about the state of the barrier. It does now know about the count
of the threads that have reached the barrier. To find this information, it needs to send a message to other
threads in its barrier group. However, given the fact that we have unknown participants in the barrier, it does
not know who the other threads in its barrier group are. As a result each thread needs to send $\mathcal{C}-1$ messages 
to other threads to find out about the status of the barrier. Once, a thread receives such a message it needs
to reply. This will create an additional $\mathcal{C}-1$ messages. It is not possible for us to reduce this number further
because if threads co-ordinate among themselves to elect a leader, we need to send at least one message per thread.
The count of messages still comes to $\mathcal{C}-1$. Thus, we need a minimum of $(2\mathcal{C}-2)$ messages to just get an updated
status of the barrier for each thread.

Now, $N$ threads are swapped in at least once. Thus, the total swap-in events are equal to $N+K$, and the message
complexity is $(N+K)(2\mathcal{C}-2)$. 
\end{proof}

Let us look at this bound if we have a co-ordinator (some kind of a centralized entity that maintains a count). 

\begin{theorem}
\label{thm:central}
If the co-ordinator is swapped out $K'$ times, then we need to send $O(\mathcal{C}K')$ messages to transfer the role of
co-ordinatorship, if we have the requirement
that always there is a co-ordinator if possible. 
\end{theorem}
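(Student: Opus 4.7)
The plan is to bound the number of messages required for a single hand-over event and then multiply by $K'$. I would first observe that when the thread currently holding co-ordinatorship is swapped out from its core, the role must be passed to some other thread in the same barrier group that is currently scheduled somewhere on the chip (if such a thread exists). By the unknown-participants assumption used in Lemma~\ref{lm:bid}, the outgoing co-ordinator has no direct knowledge of which of the remaining $\mathcal{C}-1$ cores actually hosts a participating thread, so the successor cannot be identified locally.

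Next, I would reuse the argument from Theorem~\ref{thm:dist}: the only way to discover an eligible successor is to broadcast a hand-over query (tagged with the barrier id, as required by Lemma~\ref{lm:bid}) to the $\mathcal{C}-1$ other cores, and then collect responses. In the worst case this amounts to $\mathcal{C}-1$ query messages plus at most $\mathcal{C}-1$ reply messages, so $2\mathcal{C}-2 = O(\mathcal{C})$ messages per hand-over. The co-ordinator's state (the current count, as established by Theorem~\ref{thm:nm} and Corollary~\ref{corr:nm}) can be piggybacked on either the query or the ack, so it does not affect the asymptotic bound.

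To argue that $O(\mathcal{C})$ per event is also necessary, I would invoke an adversarial scheduling argument: the adversary can leave the unique eligible successor on any one of the $\mathcal{C}-1$ cores, and since the outgoing co-ordinator cannot distinguish these configurations in advance, any correct hand-over protocol must, in the worst case, contact every core, giving an $\Omega(\mathcal{C})$ lower bound per hand-over. Summing over the $K'$ swap-outs of the co-ordinator then yields the claimed $O(\mathcal{C}K')$ total.

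The main obstacle I expect is the clause \emph{``if possible''}: if the co-ordinator is swapped out at an instant when no other participating thread is currently scheduled on any core, then no immediate hand-over can be performed, yet the invariant that a co-ordinator exists whenever feasible must still be restored as soon as some participant swaps back in. I would handle this by charging the election cost of such delayed hand-overs to the corresponding swap-in event (which itself already incurs $O(\mathcal{C})$ messages by the reasoning of Theorem~\ref{thm:dist}), so that the aggregate message count is still $O(\mathcal{C}K')$ and the invariant is preserved.
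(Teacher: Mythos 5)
Your proposal follows essentially the same route as the paper: a broadcast query to the other $\mathcal{C}-1$ cores plus replies gives $O(\mathcal{C})$ messages per hand-over, multiplied by $K'$ swap-outs (the paper counts $2\mathcal{C}-1$ because it sends the state to the new co-ordinator as a separate message rather than piggybacking it, which does not change the bound). The only substantive difference is the ``no eligible successor'' sub-case, which the paper resolves by having the departing co-ordinator write its count to memory and a later-awakening thread read it back (2 messages), whereas you defer the election and charge it to the swap-in event; both preserve $O(\mathcal{C}K')$, and your additional $\Omega(\mathcal{C})$ adversarial argument is a correct bonus the theorem does not require.
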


\begin{proof}
Let us assume a co-ordinator gets swapped out. It needs to first find if any of the other threads in its
barrier group can become a co-ordinator. It needs to send $\mathcal{C}-1$ messages to find the threads in its barrier group.
The threads in its barrier group that are active (scheduled) need to send a reply back such that a new co-ordinator
can be elected. This requires $\mathcal{C}-1$ messages. Finally, the old co-ordinator needs to send a message to the new
co-ordinator. The total number of messages is thus $2\mathcal{C}-1$. 

Let us now assume another situation where the old co-ordinator did not find any thread in its barrier group to be active
(scheduled). In this case, it needs to write its count to memory and exit. Any other thread in the barrier group that
later wakes up can read the old count from memory and restart. We thus need 2 messages here. 

The total message complexity for this operation is $O(\mathcal{C}K')$. 
\end{proof}

We observe from Theorems~\ref{thm:dist} and \ref{thm:central} that if
 we do not have a co-ordinator, the number of messages that we need to send to just get information
about the barrier's status is $(N+K)(2\mathcal{C}-2)$. If we have a co-ordinator this number is $O(\mathcal{C}K')$
(bounded by $(2\mathcal{C}-1)K'$). 
Note that $K$ and $K'$ are not the same. $K$ is the number of context switches for any thread in the barrier group,
and $K'$ is the number of context switches by the co-ordinator, which is expected to be $K/N$.  Thus we expect
the number of messages with a co-ordinator to be $O(\mathcal{C}K/N)$. We thus expect to get at least a $O(N)$ times reduction
in the number of messages.

We summarize the main results in Table~\ref{tab:summary}.

\footnotesize
\def\IEEEbibitemsep{0pt plus .5pt}
\bibliographystyle{IEEEtran}

\end{document}